\theoremstyle{plain}
\newtheorem{theorem}{Theorem} 
\newtheorem{lemma}{Lemma} 
\newtheorem{definition}{Definition} 
\newtheorem{claim}{Claim}
\crefname{claim}{Claim}{Claims}
\title{Computational Power of a Single Oblivious Mobile Agent in Two-Edge-Connected Graphs} 
\author{Taichi Inoue$^\dagger$\and Naoki Kitamura$^\dagger$ \and Taisuke Izumi$^\dagger$ \and Toshimitsu Masuzawa$^\dagger$ \\ \\ {\small $\dagger$ Osaka University, 1-5, Yamadaoka, Suita, Osaka, 565-0871, Japan}}
\date{}
\begin{document}

\maketitle

\begin{abstract}
  We investigated the computational power of a single mobile agent in an $n$-node graph with storage (i.e., node memory).
  Generally, a system with one-bit agent memory and $O(1)$-bit storage is as powerful as that with $O(n)$-bit agent memory and $O(1)$-bit storage.
  Thus, we focus on the difference between one-bit memory and oblivious (i.e., zero-bit memory) agents.
  Although their computational powers are not equivalent, all the known results exhibiting such a difference rely on the fact that oblivious agents cannot transfer any information from one side to the other across the bridge edge.
  Hence, our main question is as follows: Are the computational powers of one-bit memory and oblivious agents equivalent in 2-edge-connected graphs or not?
  The main contribution of this study is to answer this question under the relaxed assumption that each node has $O(\log\Delta)$-bit storage (where $\Delta$ is the maximum degree of the graph).
  We present an algorithm for simulating any algorithm for a single one-bit memory agent using an oblivious agent with $O(n^2)$-time overhead per round.
  Our results imply that the topological structure of graphs differentiating the computational powers of oblivious and non-oblivious agents is completely characterized by the existence of bridge edges.
\end{abstract}

\section{Introduction}
\subsection{Background and Our Result}
A \emph{mobile agent} (hereinafter called an \emph{agent}) is an individual entity that performs a given task by autonomously moving in a graph (or network).
This is one of the main computational paradigms of distributed algorithms.
In the theory of mobile agent systems, there exist various models that differ in memory resources, asynchrony, observation capability of the system, and so on.
Revealing the computational power of each model is recognized as a central question in this research field.
This study focuses on the computational power of a system with a single agent, despite that it is fairly commonplace to consider the cooperation of multiple agents.

In single-agent systems, the amount of (persistent) memory held by the agent and nodes is a major quantitative parameter of their computational capability.
The computational cost is measured by \emph{rounds} (i.e., the number of movements by the agent).
Throughout this paper, we refer to the memory of the agent as \emph{memory} and that of each node as \emph{storage}.
It is almost evident that an agent with sufficiently large memory can simulate any centralized algorithm.
That is, any computable problem can be solved.
Hence, theoretical interest lies in problem solvability when the amount of memory is limited.
For example, the graph exploration problem, which requires the agent to visit all nodes in the graph, is a fundamental problem
Generally, $\Theta(\log n)$-bit memory is necessary and sufficient to solve this problem when the nodes have no storage \cite{DBLP:journals/tcs/FraigniaudIPPP05,DBLP:journals/jacm/Reingold08}.
It is also proved that the system with $O(1)$-bit memory and $O(1)$-bit storage can solve the graph exploration problem \cite{DBLP:journals/talg/CohenFIKP08}.

In this study, we investigated the effect of the amount of memory on problem solvability.
A known result on this research line is that, in any $n$-node graph with $O(1)$-bit storage per node, the agent with one-bit memory is as powerful as that with $O(n)$-bit memory \cite{https://doi.org/10.48550/arxiv.2209.01906}.
More precisely, any property of the graph $G$ decidable by the $O(n)$-bit memory agent within polynomial movements and polynomial-time local computation is also decidable by the one-bit memory agent within polynomial movements and polynomial-time local computation.
This result is not limited to decision problems, but actually provides a general technique for simulating the execution of any $O(n)$-bit memory agent using a one-bit memory agent with a polynomial-time multiplicative overhead per round.
It has also been shown that the computational powers of the one-bit and zero-bit (i.e., oblivious) memory agents are inherently different regardless of the amount of available storage.
Thus, our question is already closed in general settings.  
However, whether the separation between the one-bit memory and oblivious agents is exhibited in a restricted graph class remains unclear. 
All known results exhibiting such a separation \cite{DBLP:journals/talg/CohenFIKP08,https://doi.org/10.48550/arxiv.2209.01906} rely on the existence of a bridge edge (i.e., a single edge such that its removal disconnects the graph) in the graph.
More precisely, they are derived from the fact that the oblivious agent cannot transfer any information from one side to the other side across the bridge edge, whereas the one-bit memory agent can. 
Therefore, our central question is then rephrased as:
Are the computational powers of the one-bit memory and oblivious agents equivalent in 2-edge-connected graphs or not?
The main contribution of this study is to answer this question positively.
We focus on 2-edge-connected graphs with a maximum degree $\Delta$ and relax the constraint on the amount of storage from $O(1)$ bits to $O(\log\Delta)$ bits.
In this setting, we present a polynomial-time overhead algorithm that simulates the execution of a single one-bit memory agent by an oblivious agent.
By combining the results of \cite{https://doi.org/10.48550/arxiv.2209.01906}, we can deduce the equivalence between an $O(n)$-bit memory agent and an oblivious agent in 2-edge-connected graphs.
This implies that the topological structure of graphs differentiating the computational powers of oblivious and non-oblivious agents is completely characterized by the existence of bridge edges.
The authors believe that this result provides a sharp insight into the computational complexity theory of mobile agents. 

\subsection{Technical Idea}
The model considered in this study follows the standard assumptions on mobile agent systems.
The graph $G$ is \emph{anonymous} (i.e., the agent cannot refer to the unique IDs of nodes), and the neighbors of a node are identified by the \emph{local port numbers} assigned to the edges incident to the node.
When an agent enters a node, it recognizes the \emph{entry port number} (assigned to the edge used to visit the current node).
In one round, the agent performs local computation, which includes updating the information stored in its own memory and the storage at the current node, and decides the \emph{outgoing port number} (assigned to the edge used to leave the node).

The main technical issue in simulating a one-bit memory agent is how we can transfer the information to an oblivious agent.
We resolve this by utilizing information on entry port numbers.
To clarify our idea, we first consider the simple case in which the graph is an oriented cycle and assume that the agent at a node $s$ wants to transfer one-bit information to the neighbor $t$ of $s$.
There are two distinct paths from $s$ to $t$ (i.e., clockwise and counterclockwise, specified by the orientation), which result in different entry port numbers for $t$.
The agent can transfer an information bit $b \in \{0,1\}$ from $s$ to $t$ using the path ending up with port number $b$.

When the graph topology is arbitrary, the simulation algorithm becomes more complicated; however, the fundamental idea is the same.
The algorithm first finds a cycle $C$ of $G$ containing edge $(s,t)$, which we refer to as the \emph{information transfer cycle (ITC)}, to transfer one-bit information from node $s$ to its neighbor $t$.
The algorithm provides a consistent orientation to the ITC and applies the algorithm above for oriented cycles to the ITC.
Because we assume that $G$ is 2-edge-connected, there necessarily exists an ITC for any edge $(s,t)$ in $G$.

Our algorithm runs a depth-first search (DFS) to compute ITC.
The agent starts DFS from $s$ with the choice of $(s, t)$ as the first traversal edge.
Because $G$ is 2-edge-connected, the agent eventually reaches $s$ again through edge $e$ different from $(s, t)$.
\cref{fig:ITC} shows an example of transmission on an ITC.
\begin{figure}[!t]
  \centering
    \includegraphics[keepaspectratio,scale=0.825]{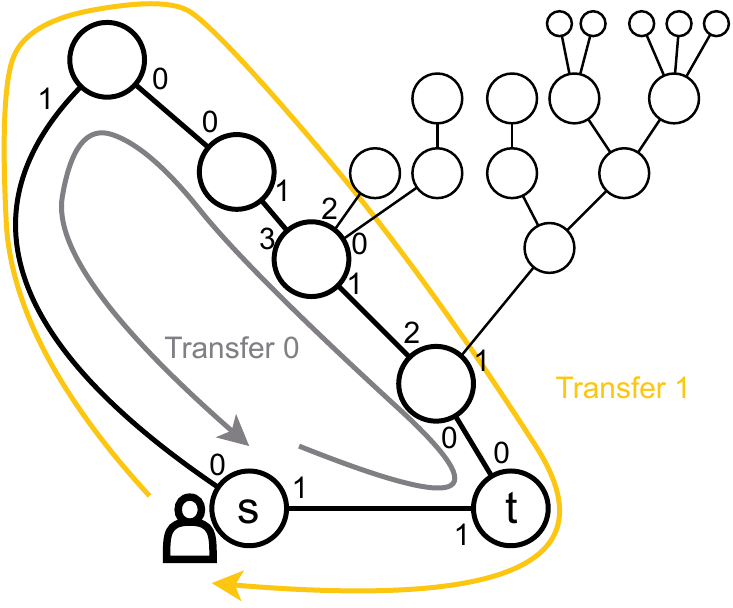}
    \caption{Example of the construction of an ITC (drawn by bold lines) on a DFS tree and imitative transmission of one-bit data. Arrows are moving directions of the agent. Some local port numbers are omitted.}
    \label{fig:ITC}
\end{figure}
Let $P_{s, t}$ be the path from $s$ to $t$ managed by the DFS algorithm.
Then, the algorithm obtains the ITC $P_{s,t} + e$.  
The agent must reset all information because the garbage information is left at the nodes that are traversed by the DFS process but not contained in the constructed ITC.
It is implemented by careful re-execution of the DFS starting from $s$ with the first traversal edge $(s, t)$.

We emphasize that its implementation is far from triviality, although the intuition of our algorithm stated above is simple and easy to follow.
A technical hurdle is that the agent itself is oblivious.
The agent cannot memorize the subtask currently executed despite our algorithmic idea being constituted of several subtasks.
Our algorithm conducts the sequential phase-by-phase composition of all subtasks carefully, using only node storage.
Details of the design are presented in \cref{ouralg}.

\subsection{Related Work}
To the best of our knowledge, the authors' prior work \cite{https://doi.org/10.48550/arxiv.2209.01906} is the first to consider the effect of memory size on the solvability of general tasks in graphs with storage, but several studies have been conducted on specific problems.
One benchmark problem is the graph exploration problem.
In the case where nodes have no storage, it has been shown that $\Theta(\log n)$-bit memory is necessary and sufficient to solve it within polynomial time \cite{DBLP:journals/tcs/FraigniaudIPPP05,DBLP:journals/jacm/Reingold08}.
When allowing nodes to be stored, the graph exploration algorithm can be solved by an oblivious agent using $O(\log \Delta)$-bit storage per node (where $\Delta$ is the maximum degree of the graph) \cite{DBLP:journals/ieicet/SudoBNOKM15}.
Minimizing the storage size, there exists a graph exploration algorithm for an $O(1)$-bit memory agent using $O(1)$-bit storage \cite{DBLP:journals/talg/CohenFIKP08}.
It also shows that exploration using a single oblivious agent is impossible if the storage size is $O(1)$ bits.
Note that the topology used in their impossibility proof has a bridge edge; that is, it is not 2-edge-connected.
An interesting open problem inspired by our result is whether an oblivious agent with $O(1)$-bit storage can solve the graph exploration problem for 2-edge-connected graphs or not.
It is possible to explore 2-edge-connected graphs if one can construct the ITC using only $O(1)$-bit storage because our simulation algorithm requires $\omega(1)$-bit storage only during the construction of the ITC.
There also exists a model in which the agent uses a ``pebble'' device that can be put on and picked up from nodes by the agent.
Disser et al. \cite{DBLP:conf/soda/DisserHK16} showed that $n$-node graph exploration can be solved using $\Theta(\log\log n)$ pebbles.
However, this method has the disadvantage of being time inefficient since it takes $n^{O(\log \log n)}$ rounds to complete the exploration.
Although it is a crucial assumption in our simulation algorithm that the agent knows the entry port number, there is a weaker model called the \emph{myopic robot}, in which the entry port number is undetectable by agents  \cite{DBLP:conf/ipps/DattaLLP15,DBLP:conf/icdcn/Kamei21,DBLP:conf/opodis/KameiLOTW19,DBLP:journals/iandc/OoshitaT22}.
Although no formal proof is given, it is almost trivial that such a weaker agent cannot simulate the agent memory as in our result, even under the assumption of 2-edge-connectivity.
While the model and problem are significantly different, Dieudonne et al. \cite{DBLP:journals/ijfcs/DieudonneDP019} presented a solution based on an approach similar to ours.
In that study, the authors considered the information transfer between the robots in the 2D plane, which cannot have an explicit communication channel. 
In their approach, the transferred information is not embedded into movement patterns as in our approach but is embedded into the (real-value) distance between two robots.

\section{Preliminaries}
\subsection{Graph}
We used $[a,b]$ to denote the set of integers at least $a$ and at most $b$.
Let $G$ be a graph with $n$ nodes. 
This is formally defined as a port-numbered graph $G=(V,E,\Pi)$, where $V$ is the set of nodes, $E$ is the set of edges, and $\Pi$ is the set of port-numbering functions (explained later).
Throughout this paper, we assume that $G$ is simple, undirected, and 2-edge-connected.
Each node in $G$ is identified by a non-negative integer $i\ (0\le i\le n-1)$.
However, the agent operating in $G$ cannot see those values, that is, $G$ is anonymous (the behavior of the agent is formally defined in \cref{agent}).
For $e\in E$, we define $G-e$ as a graph obtained by removing $e$ from $G$.
Note that $G-e$ is always connected because we assume that $G$ is 2-edge-connected. 
The edges incident on each node $i$ are distinguished by \emph{local port numbers}.
The \emph{port numbering function} at node $i$ is defined as $\pi_i:[0,\Delta_i-1]\to N_i$, where $\Delta_i$ is the degree of node $i$ and $N_i$ is the set of neighbors of $i$.
The maximum degree of $G$ is defined as $\Delta$.
Note that $\Delta_i$ is necessarily greater than one because $G$ is 2-edge-connected.
We define $\pi^{-1}_i$ as the inverse function of $\pi_i$.
The set $\Pi$ consists of port-numbering functions for all the nodes in $V$.
Each node in $G$ has storage (i.e., node memory) that keeps the stored information even after the agent leaves the node.

\subsection{Mobile Agent}
\label{agent}
We consider a single-agent model in which one mobile agent moves in graph $G$.
A mobile agent has two distinct memory spaces: \emph{temporary memory} and \emph{persistent memory}.
Temporary memory is a working space used for computation at each node, but the stored information is completely reset when the agent leaves the node.
In contrast, persistent memory space retains the stored information even when the agent moves from one node to another.
In the following argument, we use the terminology ``(agent) memory'' to refer to persistent memory and measure the space complexity of the agent by the amount of persistent memory.
Although the size of the available temporary memory is unbounded, $O(\log \Delta)$ bits of temporary memory are sufficient to implement our algorithm.
In this study, we focus on the \emph{oblivious agent} (or the agent with zero-bit memory) and \emph{one-bit agent} (or the agent with one-bit memory).

The execution of the agent follows discrete rounds $t=0,1,\cdots$.
In each round, the agent performed the following two computational steps: 
\begin{enumerate}
  \item Let $i$ be the node where the agent currently stays.
  First, the agent starts the local computation specified by the algorithm.
  At the beginning of the local computation, the local port number corresponding to the edge through which it has entered the current node $i$ (i.e., \emph{entry port number}) and the degree $\Delta_i$ of $i$ are stored in the temporary memory.
  Note that the entry port number can be arbitrary at the beginning of the execution.
  Following the information stored in the  temporary memory, persistent memory, and storage of $i$, the agent updates the storage of $i$ and its own persistent memory.
  It also determines the port number corresponding to the edge through which it leaves the node $i$ (i.e., \emph{outgoing port number}).
  \item The agent moves to the neighbor of $i$ specified by the outgoing port number.
  Note that the agent is guaranteed to arrive at the neighbor within the current round.
\end{enumerate}
\cref{fig:round} illustrated the behavior of a single round.
\begin{figure}[t]
  \centering
  \includegraphics[keepaspectratio,width=\linewidth]{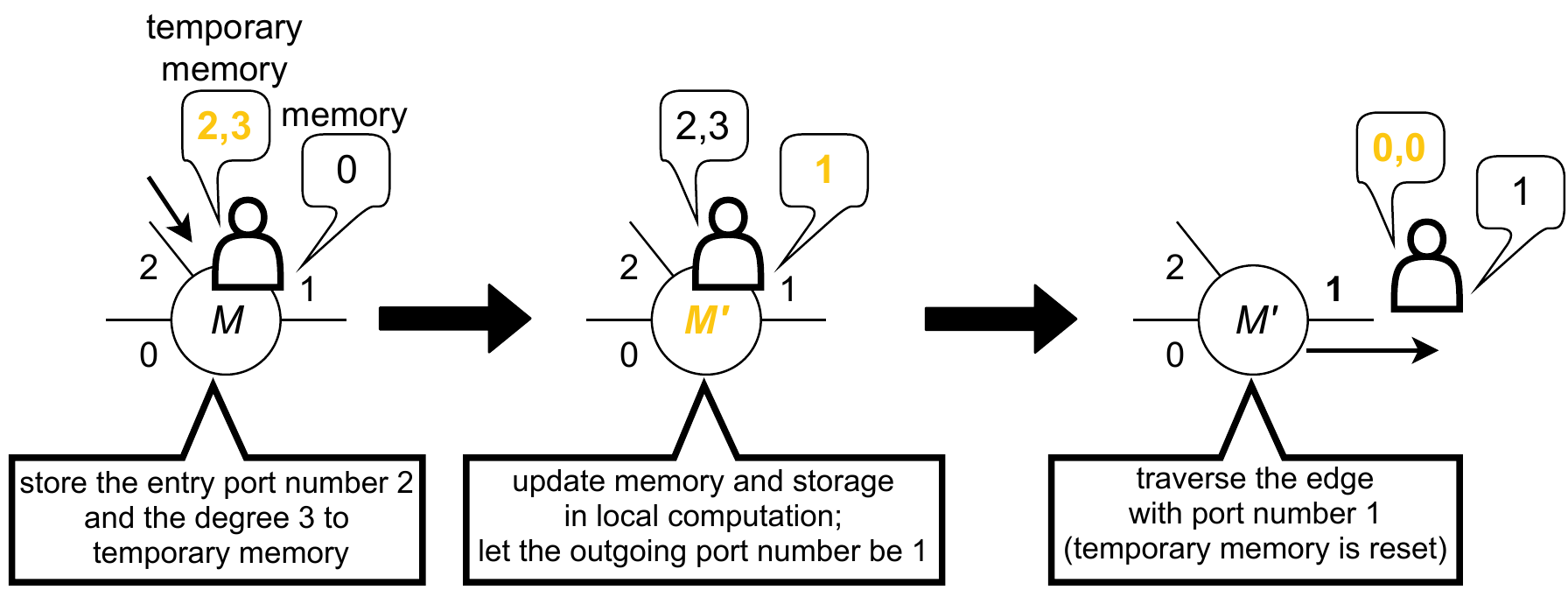}
  \caption{Workflow of one-round operation (where $M$ and $M'$ mean the storage value of the node)}
  \label{fig:round}
\end{figure}
An algorithm $A$ for a one-bit agent is defined as $A=(m,M,M',l,\phi)$, a 5-tuple of an initial memory value $m$, an initial storage value $M$ for each node, an initial storage value $M'$ for the initial location of the agent, an initial location $l$ of the agent, and a transition function $\phi$.
Note that the special initialization value $M'$ is crucial for implementing the simulation because a symmetry-breaking mechanism is required to distinguish the initial location of the simulated agent from other nodes. 
The \emph{transition function $\phi$} for a one-bit agent is defined as $\phi:\mathbb{N}\times \mathbb{Z}_{\ge-1}\times \{0,1\}^{\ast}\times \left\{0,1\right\} \to \mathbb{Z}_{\ge-1}\times \{0,1\}^{\ast}\times \left\{0,1\right\}$, where  $\mathbb{Z}_{\ge -1}$ is a set of integers greater than or equal to -1.
The arguments given to $\phi$ are the degree of the node, entry port number, storage value of the current node, and memory value.
Note that the bit length of the storage is not fixed by the algorithm because our model allows the storage size to depend on the parameters of graph $G$.
The returned values are the outgoing port number, storage value to be left at the current node, and memory value after local computation.
Outgoing port number -1 implies that the agent terminates the execution of the algorithm and stays at the same node.
For convenience, we consider an algorithm for an oblivious agent such that the memory value given to and returned by $\phi$ is always 0.
Note that $\phi$ does not consider the identifier of the current node as an argument because we assume anonymous graphs.

Next, we define the system configuration and execution.
\begin{definition}[configuration]
  A \emph{configuration $C$} of the system is represented as $C=\left((M_{0},\allowbreak M_{1},\allowbreak \dots,\allowbreak M_{n-1}),\allowbreak m,\allowbreak p, l\right) \in \left\{\{0,1\}^{\ast}\right\}^{n}\times \{0,1\}\times \mathbb{Z}_{\ge -1}\times [0,n-1]$, where $M_i$ is the storage value of node $i$, $m$ is the memory value of the agent, $p$ is the entry port number, and $l$ is the location of the agent.
  A \emph{valid configuration} in $G$ is such that the value of the entry port number is in the range $[-1, \Delta_{l} - 1]$.
\end{definition}
\begin{definition}[execution]
  \label{def:exec}
  Let $A=(m,M,M',l,\phi)$ be an algorithm.
  Given two valid configurations, $C_t = ((M_{0,t}, \dots ,M_{n-1, t}), m_t, p_t, l_t)$ and $C_{t+1} = ((M_{0,t+1},\allowbreak \dots,\allowbreak M_{n-1, t+1}),\allowbreak m_{t+1},\allowbreak p_{t+1},\allowbreak l_{t+1})$, we consider that they \emph{follow} the transition function $\phi$ of $A$ in $G$, denoted by $C_{t}\xrightarrow{\phi,G}C_{t+1}$, if the following three conditions are satisfied:
  \begin{itemize}
    \item $\phi(\Delta_{l_t}, p_t, M_{l_t, t}, m_t) = \left(\pi^{-1}_{l_t}(l_{t+1}), M_{l_t, t+1}, m_{t+1}\right)$
    \item $M_{i,t+1} = M_{i,t}$ for all $i\neq l_t$
    \item $p_{t+1} = \pi^{-1}_{l_{t+1}}(l_t)$
  \end{itemize}
  Let $C$ be any valid configuration on $G$ such that the initial value of the memory is $m$, the location of the agent is $l$, the initial storage value for node $i \neq l$ is $M$ and $M'$ for node $l$.  
  The \emph{execution $E(G,C,A)$} of the system is defined as an infinite sequence of configurations $(C_t)_{t\ge0}$ such that $C_0=C$ and $C_{t}\xrightarrow{\phi,G}C_{t+1}$ holds for any $t\ge 0$.
\end{definition}
In \cref{def:exec}, any execution is defined as an infinite sequence; however, if the algorithm terminates in a finite round, then the execution is defined as an infinite sequence such that $C_t=C_{t+1}=\dots$ holds for the termination round $t$.
As the transition function $\phi$ is assumed to be deterministic, the execution of the algorithm is uniquely defined from the initial configuration $C_0=C$.
That is, $E(G,C,A)$ is uniquely defined.
In addition, throughout this study, we only consider algorithms that do not depend on the initial entry port number.
Then, execution $E(G,C,A)$ is determined only by algorithm $A$, graph $G$, and initial location $l$ of the agent.
Hence, we refer to the execution $E(G,C,A)$ as $E(G,v,A)$ with the initial location $v\in V$ of the agent. 

\subsection{Simulation}
Let $\mathcal{A}(k,\lambda)$ be the set of all algorithms that utilize $k$-bit memory and $\lambda$-bit storage.
Note that $k$ and $\lambda$ are not necessarily constant values, but might be functions that depend on the parameters of the graph.
The goal of this study was to design an algorithm $A\in \mathcal{A}(0,\lambda)$ that simulates any algorithm $A^\ast \in \mathcal{A}(1, \lambda^\ast)$, where $\lambda$ and $\lambda^{\ast}$ represent the storage sizes of the simulator and simulated algorithms, respectively.
To this end, we provide a formal definition for the simulation of the algorithm.
First, we present the notion of $\gamma$-configurations.
\begin{definition}[$\gamma$-configuration]
  \label{def:gamma}
  Let $\gamma = (\gamma_M, \gamma_A)$ be a pair of mappings, $\gamma_{M}:\{0,1\}^{\lambda}\to\{0,1\}^{\lambda^\ast}$ and $\gamma_{A}:\{0,1\}^{\lambda} \to\{0,1\}\times \mathbb{N}$.
  The \emph{$\gamma$-configuration $\gamma(C)$} is defined as $\gamma(C)=\left((\gamma_{M}(M_{0}),\gamma_{M}(M_{1}),\dots,\gamma_{M}(M_{n-1})\right),\allowbreak \gamma_{A}(M_{l}),l)$ for configuration $C=((M_{0},M_{1},\dots,\allowbreak M_{n-1}),\allowbreak m,p,l)$.
\end{definition}
Intuitively, the functions $\gamma_M$ and $\gamma_A$ work as the ``interpreter,'' which transforms a valid configuration of $A$ into that of $A^{\ast}$ under the assumption that the simulated agent executing $A^{\ast}$ stays at the same location as the simulator agent executing $A$.
The $\gamma$-configuration $\gamma(C)$ is the configuration of $A^{\ast}$ obtained by interpretation. 
It may seem unusual that $\gamma(C)$ does not explicitly depend on the value $m$ of the agent memory.
However, it is indispensable that algorithm $A$ stores the value of the agent memory of $A^{\ast}$ into the storage of the current node of $A$ to simulate the behavior of the simulated agent running $A^{\ast}$ by an oblivious agent.
The simulation algorithm is defined as follows:
\begin{definition}[simulation of algorithm]
  \label{def:simulation}
  Let $A^\ast\in \mathcal{A}(1,\lambda^\ast)$ be an algorithm.
  We consider that \emph{an algorithm $A\in \mathcal{A}(0,\lambda)$ simulates $A^{\ast}$ in $G$}, if there exists a pair of mappings $\gamma = (\gamma_M, \gamma_A)$ of \cref{def:gamma} that satisfy the following condition:
  \begin{quote}
    Let $v$ be any node of $G$, $E(G,v,A)=(C_t)_{t \geq 0}$, and $E(G,v,A^{\ast})=(C^{\ast}_t)_{t \geq 0}$.
    There uniquely exists the monotonically increasing sequence of round numbers $T(0),T(1),T(2),\dots$ such that $T(0)=0$ holds and $\gamma\left(C_{T(t)}\right)=C_t^\ast$ holds for each $t \ge 0$.
    That is, $E\left(G,v,A^\ast\right)=\gamma\left(C_{T(0)}\right),\gamma\left(C_{T(1)}\right), \gamma\left(C_{T(2)}\right),\dots$ holds.
  \end{quote}
\end{definition}

Finally, we defined the round complexity of the simulation as follows:
\begin{definition}
  Let $A \in \mathcal{A}(0,\lambda)$ be the algorithm that simulates $A^{\ast}\in \mathcal{A}(1,\lambda^\ast)$.
  Then, the time required to simulate round $t$ is defined as $T(t+1)-T(t)$.
  If $\max_{t\geq 0}\left(T(t+1)-T(t)\right)=\mathrm{poly}(n)$ holds for any $t \ge 0$, we consider that \emph{$A$ is a polynomial-time simulator of $A^{\ast}$ in $G$}. 
\end{definition}

\section{Our Algorithm}
\label{ouralg}
We show a polynomial-time simulator $A\in\mathcal{A}(0,\lambda)$ that simulates any one-bit agent algorithm $A^{\ast}\in\mathcal{A}(1,\lambda^\ast)$ in any 2-edge-connected graph $G$.
In the following argument, we denote the oblivious agent executing $A$ by the symbol $a$ and the one-bit simulated agent executing $A^{\ast}$ by $a^\ast$.
We also denote $A^{\ast} = \left(m^{\ast},M^\ast,(M')^\ast, l^\ast, \phi^{\ast}\right)$.
The storage size of algorithm $A$ is denoted by $\lambda$, which satisfies $\lambda=\lambda^{\ast}+O(\log\Delta)$.

\subsection{Mappings and Variables on Storage}
We first introduce all the variables stored in the storage of each node in \cref{tb:variables}.
\begin{table}[!t]
  \centering
  \caption{Variables used in our simulation algorithm. The initial location of $a$ (and $a^{\ast}$) is referred to as $v \in V$.}
  \begin{tabularx}{\linewidth}{lX}
    \hline
    Name & Role \\
    \hline
    \textit{sloc} & The binary flag representing the current location of $a^{\ast}$, i.e., $sloc_i = 1$ implies that the simulated agent $a^{\ast}$ is currently at node $i$. 
    At any time, at most one node $i$ satisfies $sloc_i = 1$.
    When $sloc_i = 0$ holds for every node $i$, the algorithm is simulating the movement of $a^{\ast}$.
    Initially, $sloc_v=1$ for the initial location $v \in V$ of $a$ and $a^{\ast}$, and $sloc_j=0$ for any $j \neq v$.\\
    \textit{smem} & The variable storing the current memory value of $a^{\ast}$ at the current node of $a^{\ast}$.
    It can store an arbitrary value at any other node.\\
    \textit{smemupd} & The flag indicating the completion of the information transfer along the constructed ITC.
    At the beginning of the one-round simulation of $A^{\ast}$, the value is 0 for all nodes.
    When $a$ propagates a binary memory value $m'$ of $a^\ast$ along the ITC, the node $i$ that has already received $m'$ sets $smemupd_i = 1$.
    The initial value is zero for all nodes.\\
    \textit{spin} & The entry port number of $a^\ast$ at the current node.
    It can store an arbitrary value at any other node.
    Initially, an arbitrary value is stored.\\
    \textit{spout} & The variable for storing the computed outgoing port number of $a^{\ast}$ at the current node.
    It can store an arbitrary value at any other node.\\
    \textit{svars} & The storage for the simulated algorithm $A^{\ast}$.
    The initial value is the one specified by the algorithm $A^{\ast}$.\\
    \textit{dfsstat} & The three-state variable used inside the DFS subroutine for constructing the ITC.
    If the agent $a$ does not execute the DFS subroutine, the value zero is stored at all nodes.
    The condition $dfsstat_i = 1$ implies that the agent $a$ has visited $i$, but $i$ still has the ports not probed yet, and $dfsstat_i = 2$ implies that all the ports of $i$ have been probed.
    The initial value is zero at all nodes.\\
    \textit{par} & The variable storing the port number to the parent of the DFS tree.
    After the construction of an ITC, this variable represents the predecessor of the ITC. 
    The initial value is arbitrary.\\
    \textit{cld} & The counter indicating the ports that have been already probed: $cld_i = x$ implies that up to the $(x-1)$-th port of the node $i$ has been probed.
    After the construction of an ITC, this variable represents the successor in the ITC.
    The initial value is zero for all nodes.\\
    \textit{sim} & The flag indicating the subtask in which the agent $a$ is engaged.
    If $sim_i = 0$ holds for all nodes $i$, the agent is executing the DFS subroutine for constructing an ITC.
    Once an ITC is constructed, $sim_i = 1$ holds if and only if $i$ is contained in the constructed ITC.\\
    \textit{lastin/out} & In execution of the DFS subroutine, these variables manage the latest entry/outgoing port numbers at each node.
    The stored information is used for deleting the ``garbage'' information in the storage left by the DFS subtask.\\
    \hline
  \end{tabularx}
  \label{tb:variables}
\end{table}
Each variable is referred to with an additional subscript to clarify the node that stores it.
For example, $sloc_i$ denotes the variable \textit{sloc} stored in node $i$.
For a tuple of variables $(X, Y)$ (where $X$ and $Y$ are the names of some variables), we also use the notation $(X, Y)_i$ to represent $(X_i, Y_i)$.

To formalize the correctness criteria of our simulation algorithm, we introduce \emph{legal configurations}, defined as those satisfying $(dfsstat, sim, smemupd)_i = (0, 0, 0)$ for all nodes $i \in V$, $sloc_v = 1$ for the current location $v$ of $a$, and $sloc_i = 0$ for all other nodes $i \neq v$.
Our algorithm satisfies the condition that the sequence of all legal configurations appearing during the execution of $A$ forms the simulated execution of $A^{\ast}$.
More precisely, let $E(G, v, A) = (C_t)_{t \geq 0}$ and $t_0, t_1, t_2, \dots$ be the maximal sequence of the round numbers such that $C_{t_i}$ is legal.
Then, our simulation algorithm is correct with respect to $T(i) = t_i$ and $\gamma=(\gamma_M,\gamma_A)$ such that $\gamma_M(M_i)=svars_i$ and $\gamma_A(M_i)=(smem_i, spin_i)$ hold.

Let $\mathcal{C}$ be a set of all valid legal configurations.
Our algorithm starts from any configuration $C\in \mathcal{C}$ and guarantees that the legal configuration $C'\in\mathcal{C}$ satisfies $\gamma(C)\xrightarrow{\phi^\ast,G}\gamma(C')$ without explicit termination.
Evidently, this algorithm provides simulated execution $E(G, v, A^{\ast})$:
Following the definition of the initial values specified in \cref{tb:variables}, the initial configuration $C_0=C_{t_0}$ of the corresponding execution $E(G, v, A)$ for any $v \in V$ is legal, and $\gamma(C_0)$ becomes the initial configuration of $E(G, v, A^\ast)$.
The algorithm then generates the configuration $C_{t_1}$ such that $\gamma(C_{t_0})\xrightarrow{\phi^\ast,G} \gamma(C_{t_1})$ holds.
The algorithm immediately starts the next round of simulations, which generates $C_{t_2}$ because $C_{t_1}$ is also legal.
$A$ simulates the execution of $E(G, v, A^{\ast})$ by repeating this process.

\subsection{Overviews of Our Algorithm}
\label{overview}
First, we explain the high-level structure of the proposed algorithm.
The simulator algorithm consists of five subtasks: local computation, DFS, clean-up, memory transfer, and move-and-reset phases.
We denote by $s$ the node where agent $a$ remains at the beginning of the one-round simulation.
Because the configuration at the beginning is legal, the simulated agent $a^{\ast}$ also stays at $s$ by definition.
We present an outline of each phase as follows:
\begin{description}
  \item[Local Computation Phase:]
  This phase was executed once at the beginning of the one-round simulation of $A^{\ast}$.
  The simulator agent locally simulates the local computation of $A^{\ast}$ at node $s$ (i.e., the location of $a^\ast$) and then updates the corresponding storage variables.
  After finishing the local computation phase, agent $a$ immediately proceeds to the DFS phase.
  \item[DFS Phase:] 
  Let $t$ be the node to which the simulated agent $a^{\ast}$ moves.
  Agent $a$ executes the DFS from $s$ with the first traversal edge $(s, t)$ until it revisits $s$ (i.e., the node such that the variable \textit{sloc} stores one).
  This determines a cycle containing edge $(s, t)$, which becomes the ITC.
  \item[Clean-up Phase:] 
  The agent repeats the same DFS traversal again from $s$ (with the first traversal edge $(s, t)$) to revisit $s$ to reset the storage values left in the DFS phase, except for the one necessary to recognize the ITC.
  \item[Memory Transfer Phase:]
  The agent circulates the ITC in the direction based on the memory value $smem_s$ of simulated agent $a^{\ast}$.
  \item[Move-and-Reset Phase:] 
  The agent moves simulated agent $a^{\ast}$ from $s$ to $t$ by setting $sloc_s=0$ and $sloc_t=1$.
  It then resets the storage of all nodes in the ITC to recover the legality of the configurations.
\end{description}
The agent determines the phase in which it is currently running based on the storage values $(dfsstat,sim,\allowbreak smemupd,sloc,par)$ of the current location and the current value of $p_{in}$.
The details of this are explained in the following section.

\subsection{Details of Our Algorithm}
\label{detail}
In this subsection, we present the details of each phase.
Pseudocodes are presented in \cref{alg:main,alg:dfs,alg:cleanup,alg:transmem,alg:movereset}.
\cref{alg:main} presents the main part, including the local computation phase.
We denote the procedures of the five phases above as \textit{LocalComp()}, \textit{DFS()}, \textit{CleanUp()}, \textit{TransMem()}, and \textit{MoveReset()}, respectively.
\begin{figure}[!t]
  \begin{algorithm}[H]
    \caption{Main Part and \textit{LocalComp()}}
    \label{alg:main}
    \begin{algorithmic}[1]
        \State $p_{in} \leftarrow$ (entry port number)
        \State $\Delta_i \leftarrow$ (the degree of the node $i$)
        \If{$smemupd_i=1$}
          \State \textit{MoveReset()}
        \ElsIf{$dfsstat_i>0$}
          \If{$sim_i=1\vee p_{in}=par_i\vee sloc_i=1$}
            \State \textit{CleanUp()}
          \Else
            \State \textit{DFS()}
          \EndIf
        \ElsIf{$sim_i=1$}
          \State \textit{TransMem()}
        \Else
          \If{$sloc_i=1$}
            \Comment{\textit{LocalComp()}}
            \State $(spout, svars, smem)_i \leftarrow \phi^{\ast}(\Delta_i, spin_i, svars_i, smem_i)$
          \EndIf
          \State \textit{DFS()}
        \EndIf
    \end{algorithmic}
  \end{algorithm}
\end{figure}
The procedure \textit{LocalComp()} is described in line 14 of \cref{alg:main} (see the comment in the pseudocode), and all other phases are separately described in \cref{alg:dfs,alg:cleanup,alg:transmem,alg:movereset}.
Hereinafter, we refer to a line $Y$ or lines $Y$-$Z$ of Algorithm $X$ as $(X,Y)$ or $(X,Y$-$Z)$.
In all the pseudocodes, the current location of agent $a$ is referred to as $i$.
As in \cref{overview}, we denote the initial location and destination of $a^{\ast}$ in the simulated round by $s$ and $t$, respectively.

\paragraph*{Main part and local computation phase}
The main part shown in \cref{alg:main} calls for an appropriate procedure according to the local storage value of node $i$.
The local computation phase is executed if $(dfsstat,sim,smemupd,\allowbreak sloc)_{i} = (0,0,0,1)$ holds.
As the initial configuration is legal, every node $i$ satisfies $(dfsstat,sim,\allowbreak smemupd)_i = (0,0,0)$.
We also have $sloc_s = 1$ and $sloc_i = 0$ for all $i \neq s$ based on the constraint of the variable \textit{sloc}.
According to the definition of \textit{sloc}, the current location of $a^{\ast}$ is also at $s$, and $smem_{s}$, $spin_{s}$, and $svars_{s}$ store the situation of agent $a^{\ast}$ in the simulated execution.
The simulator agent $a$ locally simulates the local computation of $A^{\ast}$ by $a^{\ast}$ at node $s$ and stores the computed results in $spout_{s}$, $svars_{s}$, and $smem_{s}$ (1,14).
The DFS phase was invoked immediately after the local computation phase (1,15).

\paragraph*{DFS phase}
We show the pseudocode for the DFS phase in \cref{alg:dfs}.
\begin{figure}[!t]
  \begin{algorithm}[H]
    \caption{\textit{DFS()}}
    \label{alg:dfs}
    \begin{algorithmic}[1]
        \State $lastin_i\leftarrow p_{in}$
        \If{$sloc_i=1$}
          \State $(dfsstat,cld,lastout)_i \leftarrow (1,spout_i,spout_i)$
        \ElsIf{$dfsstat_i=0$}
          \Comment{visiting by forward}
          \State $(dfsstat,par)_i \leftarrow (1,p_{in})$
          \If{$par_i = 0$}
            \Comment{skip the parent port}
            \State $(cld,lastout)_i \leftarrow (1,1)$
          \Else
            \State $(cld,lastout)_i \leftarrow (0,0)$
          \EndIf
        \ElsIf{$dfsstat_i=1\wedge p_{in}=cld_i$}
          \Comment{visiting by backtrack}
          \If{$par_i = p_{in}+1$}
            \Comment{skip the parent port}
            \State $cld_i \leftarrow p_{in}+2$
          \Else
            \State $cld_i \leftarrow p_{in}+1$
          \EndIf
          \If{$cld_i<\Delta_i$}
            \Comment{unprobed ports exist}
            \State $lastout_i\leftarrow cld_i$
          \Else
            \Comment{no unprobed port exists}
            \State $(lastout,dfsstat)_i\leftarrow (par_i,2)$
          \EndIf
        \Else
          \Comment{invoking backtrack}
          \State $lastout_i \leftarrow p_{in}$
        \EndIf
      \State Move to $\pi_{i}(lastout_i)$
    \end{algorithmic}
  \end{algorithm}
\end{figure}
This phase is executed if one of the following conditions is satisfied:
\begin{itemize}
  \item $(dfsstat,sim,smemupd,sloc)_i=(0,0,0,0)$ or immediately after the local computation phase (1,15),
  \item $dfsstat_i>0$, $(sim,smemupd,sloc)_i=(0,0,0)$, and $p_{in}\neq par_i$ (1,9).
\end{itemize}
Intuitively, the first and second conditions are applied to the cases in which the agent visits $i$ first and re-visits $i$.
Lines (2,2-3) are for the exceptional behavior of the first invocation of \textit{DFS()} immediately after the local computation phase.
Then, the choice of the probed port must correspond to edge $(s, t)$, that is, the port stored in $spout_i$.
Note that, when the agent visits node $i = s$ again, $dfsstat_i > 0$ and $sloc_i=1$ are satisfied, and (2,2-3) are not executed because the main routine invokes \textit{CleanUp()} in (1,7).
Because the simulation starts from a legal configuration, the variables $sim_i$, $smemupd_i$, and $sloc_i$ for any $i \neq s$ store 0 at the beginning of the DFS phase and are not modified in the subroutine \textit{DFS()}.
Consequently, the DFS phase continues if the agent visits nodes other than $s$.
Throughout the DFS phase, the agent visiting node $i$ stores the entry and outgoing port numbers in \textit{lastin} and \textit{lastout}, respectively.
Line (2,1) is for storing the entry port number.
To store the outgoing port number, the agent first writes the port number to which the agent will move into $lastout_i$ (lines 3, 7, 9, 16, 18, and 20) and finally moves to the port indicated by $lastout_i$ (2,21).
The information of \textit{lastin} and \textit{lastout} is used in procedure \textit{CleanUp()}.
During the DFS phase, the agent explored $G$ following the standard DFS.
Specifically, it probes the ports of the visited node individually in ascending order, which enables the agent to recognize the ports already probed by storing only one port number.
The latest probed port number was stored in $cld_i$.
When the agent returns to the current node $i$ through an edge with port number $cld_i$ by backtrack, $cld_i$ is incremented.
If $cld_i + 1$ is the port indicating the parent of the DFS tree, then it is skipped (2,6 and 2,11).
\footnote{Since the standard (centralized) DFS always moves to a neighboring unvisited node, agent-based algorithms cannot identify whether a neighbor is already visited.
Hence, the agent must check all the neighbors.
Then, the following case can occur: the agent exits from $i$ through port $p$, but the destination is already visited.
Thus, it returns immediately to $i$.
In our algorithm, this case is treated as a backtrack.}
If the value of $cld_i$ after the increment exceeds $\Delta_i - 1$, the agent has no neighbor of $i$ to be checked and thus performs the backtrack with setting $dfsstat_i = 2$ (2,18).
When the agent visits $i$ more than once, the variable $dfsstat_i$ stores a nonzero value at the second or later visit.
Then, the condition $(dfsstat, sim, smemupd)_i = (1, 0, 0)$ is satisfied, and the entry port number is necessarily different from $par_i$ because, in such a case, the agent traverses a non-DFS-tree edge or performs the backtrack.
Consequently, the agent can distinguish whether it visits $i$ by forward or non-forward movement under conditions $dfsstat_i=1$ and $p_{in} \neq par_i$ (2,10).

\paragraph*{Clean-up Phase}
We show the pseudocode of the clean-up phase in \cref{alg:cleanup}.
\begin{figure}[!t]
  \begin{algorithm}[H]
    \caption{\textit{CleanUp()}}
    \label{alg:cleanup}
    \begin{algorithmic}[1]
        \If{$sim_i=0$}
          \If{$(dfsstat,sloc)_i=(1,1)$}
            \State $(par,dfsstat,sim)_i \leftarrow (p_{in},0,1)$
            \State Move to $\pi_{i}(cld_i)$
          \Else
            \State $sim_i\leftarrow 1$
            \If{$par_i=0$}
              \Comment{skip the parent port}
              \State $cld_i\leftarrow 1$
            \Else
              \State $cld_i\leftarrow 0$
            \EndIf
            \If{$(dfsstat,lastin,lastout)_i=(1,p_{in},cld_i)$}
              \State $dfsstat_i\leftarrow 0$
              \Comment{initialization}
            \EndIf
            \State Move to $\pi_{i}(cld_i)$
          \EndIf
        \ElsIf{$p_{in}=cld_i$}
          \Comment{visiting by backtrack}
          \If{$par_i=p_{in}+1$}
            \Comment{skip the parent port}
            \State $cld_i\leftarrow cld_i +2$
          \Else
            \State $cld\leftarrow cld_i +1$
          \EndIf
          \If{$dfsstat_i=1$}
            \If{$(lastin,lastout)_i=(p_{in},cld_i)$}
              \State $dfsstat_i\leftarrow 0$
              \Comment{initialization}
            \EndIf
            \State Move to $\pi_{i}(cld_i)$
          \ElsIf{$cld_i<\Delta_i$}
            \Comment{unprobed ports exist}
            \State Move to $\pi_{i}(cld_i)$
          \Else
            \Comment{no unprobed port exists}
            \If{$(lastin,lastout)_i=(p_{in},par_i)$}
              \State $(dfsstat,sim)_i\leftarrow (0,0)$
              \Comment{initialization}
            \EndIf
            \State Move to $\pi_{i}(par_i)$
          \EndIf
        \Else
          \Comment{invoking backtrack}
          \If{$(lastin,lastout)_i=(p_{in},p_{in})$}
            \If{$dfsstat_i=1$}
              \State $dfsstat_i\leftarrow 0$
              \Comment{initialization}
            \Else
              \State $(dfsstat,sim)_i\leftarrow (0,0)$
              \Comment{initialization}
            \EndIf
          \EndIf
          \State Move to $\pi_{i}(p_{in})$
        \EndIf
    \end{algorithmic}
  \end{algorithm}
\end{figure}
This phase is executed if $dfsstat_i > 0$ holds, and either $sim_i =1$, $p_{in} = par_i$, or $sloc_i = 1$ is satisfied (1,5-6). 
This phase begins immediately after the agent returns to $s$ during the DFS phase.
At the beginning of this phase, the set of nodes that satisfy $dfsstat=1$ forms an ITC.
It first updates $par_i (=par_s)$ by $p_{in}$, resulting in consistent orientation of the ITC by the variable \textit{par}.
In addition, the variable \textit{cld} for all nodes in the ITC presents the inverse orientation.
The goal of this phase is to reset all the nodes $i$ with $dfsstat_i > 0$ by setting $dfsstat_i = 0$ and marking the nodes $i$ in the ITC with $sim_i = 1$.
The agent in the clean-up phase performs DFS up to the first revisit of $s$ similar to the DFS phase.
Every node $i$ visited in the clean-up phase must satisfy $dfsstat_i > 0$ (every node $i$ satisfying $dfsstat_i > 0$ is necessarily visited) because the DFS traversal in this phase is the same as that in the DFS phase.
The main technical challenge is to make the agent distinguish between the DFS  and clean-up phases, particularly at the first visit of $i$.
This issue was resolved as follows:
In the DFS phase, the agent never enters node $i$ with $dfsstat_i > 0$ (i.e., the second or later visit of $i$) through the edge with port number $par_i$.
Hence, if $p_{in}=par_i$ and $dfsstat_i>0$ holds in the clean-up phase, agent $a$ correctly recognizes that it enters $i$ by the forward movement of the DFS in the clean-up phase.
The value update of $sim_i$ from zero to one occurs when the agent visits $i$ first in the clean-up phase (3,6), which enables the agent to distinguish between the clean-up and DFS phases when revisiting $i$.
During the execution, the agent must initialize the information of the storages written in the DFS phase, except for the information for the ITC (i.e., the information of $sim_i = 1$ for $i$ in the ITC).
However, resetting $dfsstat_i$ can result in the loss of recognition in the current phase when the agent visits $i$ again in the subsequent execution of the clean-up.
To avoid this, the agent checks whether the current visit of $i$ is the final one by comparing the number of entry and outgoing ports to the values of $(lastin, lastout)_i$ left in the DFS phase.
One can show that the agent never visits $i$ again in the subsequent execution of the clean-up phase if they are the same (formally proved as \cref{clm:lastinout}).
Finally, when the agent returns to $s$, all unnecessary information is reset, and the agent proceeds to the memory transfer phase under the condition $(dfsstat,sim,smemupd)_s = (0,1,0)$ (1,10).
It is noteworthy that $sim_j=1$ holds if and only if $j$ is contained in the constructed ITC.

\paragraph*{Memory Transfer Phase}
We show the pseudocode of the memory transfer phase in \cref{alg:transmem}.
\begin{figure}[!t]
  \begin{algorithm}[H]
    \caption{\textit{TransMem()}}
    \label{alg:transmem}
    \begin{algorithmic}[1]
        \If{$sloc_i=1$}
          \State $smemupd_i\leftarrow 1$
          \If{$smem_i=0$}
            \Comment{transfer 0 ($p_{in}=par_i$)}
            \State Move to $\pi_{i}(cld_i)$
          \Else
            \Comment{transfer 1 ($p_{in}=cld_i$)}
            \State Move to $\pi_{i}(par_i)$
          \EndIf
        \Else
          \Comment{receive the transferred value}
          \State $smemupd_i\leftarrow 1$
          \If{$p_{in}=par_i$}
            \Comment{transfer 0}
            \State $smem_i\leftarrow 0$
            \State Move to $\pi_{i}(cld_i)$
          \Else
            \Comment{transfer 1}
            \State $smem_i\leftarrow 1$
            \State Move to $\pi_{i}(par_i)$
          \EndIf
        \EndIf
    \end{algorithmic}
  \end{algorithm}
\end{figure}
In this phase, the agent circulates ITC.
Recall that node $i$ in the ITC is identified by condition $sim_i = 1$.
The agent recognizes that the current round is in the memory transfer phase if $(sim, dfsstat)_i = (1, 0)$ holds (1,10-11).
It sets $smemupd_i=1$ and moves to one of $i$'s neighbors in the ITC, specified by the value of $smem_i$.
As stated above, the variables \textit{par} and \textit{cld} at the nodes in the ITC present two opposite orientations of the ITC; thus, we observe that agent $a$ transfers 0 if $p_{in}=par_i$ and 1 otherwise.
The transferred value is written for  all nodes in the ITC (4,10 and 4,13).
If $smemupd_i = 1$ at the visited node, the agent detects the termination of circulating the ITC, which triggers the following move-and-reset phase (1,3-4).

\paragraph*{Move-and-Reset Phase}
We show the pseudocode for the move-and-reset phases in \cref{alg:movereset}.
\begin{figure}[!t]
  \begin{algorithm}[H]
    \caption{\textit{MoveReset()}}
    \label{alg:movereset}
    \begin{algorithmic}[1]
        \If{$sloc_i=1$}
        \Comment{leave from $s$}
          \State $sloc_i\leftarrow 0$
          \State Move to $\pi_{i}(spout_i)$
        \ElsIf{$p_{in}=par_i$}
          \Comment{reach to $t$}
          \State $(sloc,spin,sim,smemupd)_i\leftarrow (1,p_{in},0,0)$
          \State Move to $\pi_{i}(par_i)$
        \Else
          \Comment{initialization}
          \State $(sim,smemupd)_i\leftarrow (0,0)$
          \State Move to $\pi_{i}(par_i)$
        \EndIf
    \end{algorithmic}
  \end{algorithm}
\end{figure}
This phase is executed when $smemupd_i = 1$ (1,3). 
First, agent $a$ sets $sloc_s=0$ at node $s$, which implies that $a^\ast$ leaves $s$ and moves to $\pi_s(spout_s)=t$.
At node $t$, the agent updates $sloc_t$ and $spin_t$ by 1 and $p_{in}$, respectively (5,5).
The one-round simulation of $a^{\ast}$ finishes because $smem_t$ is already updated by the memory transfer phase.
The remaining task is to reset the expired information of variables \textit{sim} and \textit{smemupd} left in the ITC.
To reset this, the agent circulates the ITC following the orientation by \textit{cld}.
The requirement for the orientation is to make the agent distinguish the first movement (for updating \textit{sloc} and \textit{spin}) from the following reset movement.
Since the port number of the edge $(s, t)$ at $t$ is $par_t$, if the agent visits $t$ with $p_{in} = par_t$, it can recognize that the current round is for updating \textit{sloc} and \textit{spin} (5,4), in contrast with the fact that $p_{in} = cld_t$ always holds in the reset circulation.
If $sloc_i = 1$ holds in the reset circulation, this implies that $i = t$, and $(dfsstat,sim,smemupd)_i=(0,0,0)$ also holds for all $i$.
The configuration becomes legal since all nodes not in the ITC have been reset correctly in the clean-up phase.

\subsection{Correctness}
In this section, we prove that each procedure correctly executes the corresponding phase and bound their running times.
To prove this, we first introduce the standard (centralized) DFS process under the following conditions:
\begin{enumerate}
  \item Initially, the search head points at $s$ and moves to $t$ at the first neighborhood search.
  \item When visiting $i \neq s$, the search head sequentially probes its neighbors along the order of the corresponding port numberings.
  \item If the search head moves to a node $i$ already searched, it immediately goes back to the previous node.
  \item The search process terminates when the search head reaches $s$ again.  
\end{enumerate}
where $s$ and $t$ denote the nodes defined in the previous section.
We define $V_{dfs}$ as the set of all nodes reached by the search head in this process and $S=i_0, i_1, \dots i_M$ as the trajectory of the search head.
Note that, in this sequence, one node can appear more than once.
In addition, for node $s'$ visited immediately before re-vising $s$, we define $V_{ITC}$ as the set of nodes on the path from $s$ to $s'$ in the DFS tree.
All $V_{dfs}$, $S$, and $V_{ITC}$ are uniquely determined only by nodes $s$ and $t$, graph $G$, and its port numbering functions $\Pi$.

For any phase $X$, if the agent invokes the procedure corresponding to $X$ in round $r$, we say that round $r$ is of phase $X$.
To make it well-defined, we slightly modified our algorithm by separating the round executing \textit{LocalComp()} and executing \textit{DFS()}.
That is, after the local computation phase, the agent finishes the current round with no movement, and in the next round, it invokes \textit{DFS()}.
Trivially, the correctness of the modified algorithm was lower than that of the original algorithm.
Let $r_X$ be the first round of phase $X$ during the execution.
The \emph{sub-execution} of phase $X$ is defined as the maximal period from $r_X$ in which every round is of phase $X$.
For the final round $r'$ of the sub-execution of $X$, the configuration at the beginning of round $r' + 1$ is called the \emph{resultant configuration} of $X$.  

A node $i$ is called \emph{legal} if it satisfies $(dfsstat, sim, smemupd)_i = (0, 0, 0)$.
First, we present the correctness criteria for each phase.
\begin{definition}[correctness of phases]
  \label{def:correctness}
  Each phase is called \emph{correct} if the sub-execution of the phase satisfies the following condition:
  \begin{description}
    \item[Local Computation Phase]
    Starting from any valid legal configuration $C\in \mathcal{C}$, the resultant configuration $C'$ satisfies
    $(spout,svars,smem)_{s} = \phi^\ast(\Delta_{l^\ast},p_{in}^\ast,M^\ast_{l^\ast},m^\ast)$ and $C' \in\mathcal{C}$.
    \item[DFS Phase]
      Starting from the resultant configuration of any correct local computation phase, the sub-execution reaches the resultant configuration satisfying as follows:
      \begin{enumerate}
        \item For any $u \in V$, $(sim, smemupd)_u = (0, 0)$ holds.
        \item For any $u \in V$, $dfsstat_u = 1$ holds if $u \in V_{ITC}$, $dfsstat_u = 2$ holds if $u \in V_{dfs} \setminus V_{ITC}$, or $dfsstat_u = 0$ holds otherwise.
        \item The variables $(spout, svars, smem)_s$ store the same values as the starting configuration.
      \end{enumerate}
    \item[Clean-up Phase]
      Starting from the resultant configuration of any correct DFS phase, the sub-execution reaches the resultant configuration satisfying as follows:
      \begin{enumerate}
        \item For each $u \in V_{ITC}$, $(dfsstat, sim, smemupd)_u = (0, 1, 0)$ holds.
        \item For each $u \in V_{ITC}$, $\pi_u(par_u)$ and $\pi_u(cld_u)$ are also contained in $V_{ITC}$, and letting $\pi_u(par_u) = x$, $\pi_x(cld_x) = u$ is satisfied.
        \item Any node $u \notin V_{ITC}$ is legal, and $(spout, svars, smem)_s$ store the same values as the starting configuration.
      \end{enumerate}
    \item[Memory Transfer Phase]
      Starting from the resultant configuration of any correct clean-up phase, the sub-execution reaches the resultant configuration satisfying as follows:
      \begin{enumerate}
        \item For any $u \in V_{ITC}$, $(dfsstat,sim,smemupd)_u=(0,1,1)$ and $smem_u=smem_s$ holds.
        \item Any node $w \notin V_{ITC}$ is legal, and $(spout, svars, smem)_s$ store the same values as the starting configuration.
      \end{enumerate}
    \item[Move-and-Reset Phase]
      Starting from the resultant configuration of any correct memory transfer phase, the sub-execution reaches the resultant configuration satisfying as follows: 
      \begin{enumerate}
        \item Any node $i\in V$ is legal, and $sloc_t=1$ holds only for $t$.
        \item The variable $spin_t$ stores $\pi^{-1}_t(s)$ (i.e., the entry port number of $a^\ast$). 
        \item $smem_t=smem_s$.
      \end{enumerate}
  \end{description}
\end{definition}
It is easy to verify that the resultant configuration of the phase triggers the subsequent phase.
Hence, one can trivially deduce the correctness of our algorithm based on the correctness of all phases.
We focus on the correctness of the remaining four phases because procedure \textit{LocalComp()} is trivially correct.
\begin{lemma}
  \label{lem:dfs}
  The DFS phase is correct.
\end{lemma}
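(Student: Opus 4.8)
The plan is to show that the sub-execution of \textit{DFS()} faithfully simulates the centralized DFS process defined immediately before the lemma, and that the storage variables end up as required by the three conditions in Definition \ref{def:correctness}. First I would set up an invariant that is maintained at the beginning of every round of the DFS phase: the location of agent $a$ equals the current position $i_k$ of the centralized search head, the port through which $a$ entered equals the port the search head used, and for every node $u$ the variables $(dfsstat, par, cld)_u$ encode exactly the state of $u$ in the centralized process — namely $dfsstat_u = 0$ if $u$ has not been reached, $dfsstat_u = 1$ if $u$ has been reached but still has unprobed ports, $dfsstat_u = 2$ if all ports of $u$ have been probed, $par_u$ is the port toward the DFS-tree parent (with $par_s$ pointing along $(s,t)$ after the special first step), and $cld_u$ is the next port to be probed (or already $= \Delta_u$ when exhausted). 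I would also record that $sim_u = smemupd_u = 0$ and $sloc_u$ is unchanged for all $u$ throughout, since \textit{DFS()} never writes these; this immediately gives condition 1 of the DFS phase and the invariance of $(spout,svars,smem)_s$ asserted in condition 3 (the only write to $spout_s$ happens in \textit{LocalComp()}, and \textit{DFS()} at $i=s$ only reads $spout_s$ in line (2,3)).

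The core of the argument is a case analysis on one round of \textit{DFS()} matched against one step of the centralized process, showing the invariant is preserved. The cases are: (i) the very first invocation at $i=s$ right after \textit{LocalComp()}, where $sloc_s=1$ triggers (2,2-3) and the agent leaves along $spout_s = \pi_s^{-1}(t)$, matching condition 1 of the centralized process; (ii) a forward visit to a fresh node ($dfsstat_i=0$), where (2,5-9) sets $par_i = p_{in}$ and initializes $cld_i,lastout_i$, skipping the parent port when it is port $0$; (iii) a backtrack return through the port just probed ($dfsstat_i=1 \wedge p_{in}=cld_i$), where (2,11-19) advances $cld_i$ past the parent port if necessary and either moves to the next port or, if $cld_i \ge \Delta_i$, sets $dfsstat_i = 2$ and backtracks to the parent; (iv) an immediate bounce-back from an already-visited node ($dfsstat_i \ge 1$, $p_{in}\neq par_i$, $p_{in}\neq cld_i$), handled by the else branch (2,20) which just returns through $p_{in}$ — this is exactly the "treated as a backtrack" behavior described in the footnote. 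In each case I would check that the outgoing port computed by \textit{DFS()} equals the move made by the centralized search head, and that the recursion over $G$'s ports using only the single pointer $cld_i$ is legitimate because ports are probed in ascending order. Crucially, I must verify that the phase does not terminate prematurely: the agent stays in the DFS phase as long as it visits nodes $u \neq s$ (since $sim_u = smemupd_u = sloc_u = 0$ there), and at the moment it re-enters $s$ we have $dfsstat_s = 1$ and $sloc_s = 1$, so the main routine (1,6-7) diverts to \textit{CleanUp()} rather than \textit{DFS()} — this is the final round of the sub-execution. Termination (finiteness of $M$) follows from the fact that the centralized DFS on the finite graph $G$ halts; combined with the per-round bijection this bounds the sub-execution length by $O(|E|) = O(n\Delta) = \mathrm{poly}(n)$ rounds.

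Finally, I would read off conditions 2 and 3 from the state of the centralized process at termination: a node is reached by the search head iff it lies in $V_{dfs}$, and among those, the ones all of whose ports have been exhausted are exactly $V_{dfs}\setminus V_{ITC}$ (these carry $dfsstat = 2$), while the nodes on the DFS-tree path from $s$ to the node $s'$ visited just before re-entering $s$ — that is, $V_{ITC}$ — still have an unprobed port (the one leading back toward $s'$'s successor) and hence carry $dfsstat = 1$; unreached nodes keep $dfsstat = 0$. The main obstacle I anticipate is case (iv) together with the parent-port-skipping logic: I need to be careful that the two "skip the parent port" branches (triggered when the parent port is $0$, handled at first visit in (2,6-7), versus when $cld_i$ would next equal the parent port, handled on backtrack in (2,12-13)) together guarantee the agent probes every non-parent port exactly once and never the parent port, and that the classification $dfsstat_i \in \{1,2\}$ correctly distinguishes $V_{ITC}$ from $V_{dfs}\setminus V_{ITC}$ even when a node is entered and bounced back from several non-tree edges before its final backtrack. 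Establishing the round-by-round bijection rigorously through these interleaved non-tree traversals is where the bulk of the work lies; once that bijection is in hand, all three conditions are immediate.
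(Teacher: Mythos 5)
Your overall strategy coincides with the paper's: argue that the agent's trajectory in the DFS phase coincides step-by-step with the centralized DFS process, that the phase ends exactly upon re-entering $s$, and then read off the final values of \textit{dfsstat} ($1$ on $V_{ITC}$, i.e.\ the nodes never backtracked out of; $2$ on $V_{dfs}\setminus V_{ITC}$; $0$ elsewhere), with \textit{sim}, \textit{smemupd} and $(spout,svars,smem)_s$ untouched. However, there is one concrete gap, and it concerns precisely the step you label as crucial. You justify that ``the agent stays in the DFS phase as long as it visits nodes $u\neq s$'' by the fact that $sim_u=smemupd_u=sloc_u=0$ there. That is not sufficient: the dispatcher (1,6) invokes \textit{CleanUp()} at a node with $dfsstat_i>0$ whenever $sim_i=1\vee p_{in}=par_i\vee sloc_i=1$, so even with $sim_i=sloc_i=0$ the agent would leave the DFS phase prematurely if it ever re-entered a visited node through the port stored in $par_i$. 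Your case analysis silently assumes this cannot happen (case (iv) is stated with $p_{in}\neq par_i$, and the case $dfsstat_i\geq 1\wedge p_{in}=par_i$ at $i\neq s$ is never shown to be impossible), so the case split is not proved exhaustive.

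The missing argument is the observation the paper makes explicitly: $par_u$ is set at $u$'s first (forward) visit to the port toward its DFS parent $w$, and the DFS never traverses the tree edge $(w,u)$ into $u$ again --- $w$ probes each of its non-parent ports at most once, a non-tree neighbor entering $u$ uses a different edge and hence a different port, and a child backtracking into $u$ enters via $cld_u\neq par_u$ (the parent port is skipped when \textit{cld} is advanced). Hence $p_{in}\neq par_u$ at every revisit of $u\neq s$, which is exactly what makes the dispatcher route to \textit{DFS()} at intermediate nodes and to \textit{CleanUp()} only upon returning to $s$ (where $sloc_s=1$ fires). With this one-paragraph addition your plan matches the paper's proof; a minor further nit is that a node $u\in V_{ITC}$ need not literally have an unprobed port left --- the correct reason $dfsstat_u=1$ is that the agent never returned to $u$ through the edge indicated by $cld_u$, i.e.\ never executed the backtrack-and-set-$2$ step at $u$.
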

\begin{proof}
  Procedure \textit{DFS()} does not modify \textit{sim} or \textit{smemupd}.
  Because we assume that the initial configuration is legal, this proof does not consider the conditions of \textit{sim} and \textit{smemupd}, but focuses on the conditions of variable \textit{dfsstat}.
  We first show that the agent moves along the trajectory $i_0, i_1, \dots, i_M$ during the sub-execution of the DFS phase.
  If the procedure \textit{DFS()} is invoked in some rounds, it chooses the next visited node in the same way as the centralized DFS explained above.
  Hence, it suffices to show that the DFS phase continues until the round when the agent visits $s$.
  If $dfsstat_i=0$ holds, procedure \textit{DFS()} is necessarily executed (1,12 and 1,15).
  In addition, during the DFS phase, $sim_i=0$ and $p_{in}\neq par_i$ hold for any $i\in V$ such that $dfsstat_i > 0$ because $dfsstat_i > 0$ implies that $i$ is already visited in the DFS phase.
  Thus, $par_i$ appropriately points to the parent of $i$ in the DFS tree (evidently, the DFS search never visits node $i$ twice through the edge from the parent).
  Thus, the sub-execution of the DFS phase terminates if and only if $sloc_i=1$ and $dfsstat_i>0$ hold (1,5-7).
  When agent $a$ revisits $s$, $dfsstat_s=1$ and $sloc_s=1$ hold.
  Hence, the DFS phase continues until node $s$ has been revisited.
  We show that the resultant configuration satisfies the condition of \cref{def:correctness}.
  Let $s'$ be the last newly visited node in the DFS phase (i.e., the node immediately before the revisit of $s$). 
  Variable $dfsstat_i$ is set to 2 (and the agent backtracks) if and only if all ports except that $par_i$ are probed.
  According to the definition of $V_{ITC}$, the agent does not backtrack at any node in $V_{ITC}$.
  That is, at any node $u \in V_{ITC}$, the agent does not return to $u$ through the edge indicated by $cld_u$.
  This implies that $dfsstat_u = 1$ holds and the condition of \cref{def:correctness} is satisfied at any $u \in V_{ITC}$.  
\end{proof}

The proof of the clean-up phase contains a slight technical non-triviality in the distinction of the DFS phase.
\begin{lemma}
  The clean-up phase is correct.
\end{lemma}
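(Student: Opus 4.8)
The plan is to mirror the proof of \cref{lem:dfs}: first show that the sub-execution of the clean-up phase re-walks exactly the DFS trajectory $S = i_0, i_1, \dots, i_M$, and then track how the storage variables change along $S$ to verify the three conditions of \cref{def:correctness}. From \cref{lem:dfs} I would take as given the resultant configuration of the DFS phase, in particular that every $u \in V_{dfs}$ has $dfsstat_u = 1$ if $u \in V_{ITC}$ and $dfsstat_u = 2$ otherwise, that $par_u$ points to the DFS-tree parent of $u$, that $(lastin, lastout)_u$ records the entry/outgoing ports of the \emph{last} DFS visit of $u$, and the key DFS property that the agent never re-enters an already-visited node through its parent port. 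Write $V_{ITC} = \{u_0 = s, u_1, \dots, u_k = s'\}$ with $u_j$ the DFS-tree parent of $u_{j+1}$.

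For the trajectory equivalence I would note that \textit{CleanUp()} chooses its outgoing port exactly as \textit{DFS()}: on the first (forward) visit of a node $i$ it rewrites $cld_i$ to the smallest non-parent port and leaves through $\pi_i(cld_i)$; on a backtrack visit ($p_{in} = cld_i$) it increments $cld_i$, skipping the parent port, and leaves through $\pi_i(cld_i)$, or through $\pi_i(par_i)$ once $cld_i$ reaches $\Delta_i$; and on an ``invoking backtrack'' ($p_{in} \neq cld_i$ at an already-visited node) it returns through $\pi_i(p_{in})$. Since $sloc_s = 1$, every entry into $s$ is routed to \textit{CleanUp()}, so on the first round of the sub-execution (the re-entry into $s$ that closed the DFS traversal) its first branch fires, setting $par_s$ to the port of the closing edge $e$, $dfsstat_s \leftarrow 0$, $sim_s \leftarrow 1$, and leaving through $cld_s = spout_s$, i.e. toward $t = u_1$. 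Hence, provided every intermediate round really invokes \textit{CleanUp()}, the agent re-walks $i_0, \dots, i_M$, and the sub-execution ends exactly when it re-enters $s$ — which is then routed to \textit{TransMem()} since $(dfsstat, sim, smemupd)_s = (0, 1, 0)$; in particular it is finite with $M = O(|E|) = \mathrm{poly}(n)$ rounds.

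The main obstacle is the distinction between the DFS and clean-up phases, i.e. certifying that the main routine really routes to \textit{CleanUp()} on every round of the sub-execution, and in particular on the first visit of a node $u \neq s$ where $sim_u$ is still $0$. This is where I would invoke the DFS property above: after the DFS phase $dfsstat_u > 0$ and the first clean-up visit of $u$ arrives through $par_u$, so the main-routine test $sim_i = 1 \vee p_{in} = par_i \vee sloc_i = 1$ holds — via $p_{in} = par_u$ for $u \neq s$ and via $sloc_s = 1$ for $s$ — and on every later visit of $u$ it holds because $sim_u$ was set to $1$ at the first clean-up visit. Nodes outside $V_{dfs}$ are never entered, so they keep $(dfsstat, sim, smemupd) = (0,0,0)$ and stay legal. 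A second point of care is that \textit{CleanUp()} must clear $dfsstat_u$ (and $sim_u$ too when $u \notin V_{ITC}$) at exactly the \emph{last} visit of $u$, since an early reset would misroute a later visit of $u$ into \textit{TransMem()}; the procedure tests whether the current (entry, outgoing) port pair equals $(lastin_u, lastout_u)$, which by trajectory equivalence holds iff the current visit is the last one, and \cref{clm:lastinout} guarantees the agent never returns to $u$ once this test succeeds, so the reset happens exactly once and at the right round.

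It remains to read off the three conditions of \cref{def:correctness} from the traversal. For $u \in V_{ITC}$: $u_0 = s$ gets $(par, dfsstat, sim)_s \leftarrow (\text{port of }e,\, 0,\, 1)$ at the start of the phase and is not re-entered afterwards; each $u_j$ with $j \geq 1$ gets $sim_{u_j} \leftarrow 1$ at its forward visit, keeps $par_{u_j}$ pointing to $u_{j-1}$, ends with $cld_{u_j}$ equal to the port toward $u_{j+1}$ (toward $s$ when $u_j = s'$) because the traversal never returns to $u_j$ after leaving it toward the next ITC node, and has $dfsstat_{u_j}$ reset to $0$ at its last visit while $sim_{u_j}$ remains $1$; \textit{smemupd} is never written by \textit{CleanUp()} and so stays $0$. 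This gives condition~1, i.e. $(dfsstat, sim, smemupd)_{u_j} = (0,1,0)$, and condition~2, i.e. \textit{par} and \textit{cld} orient the ITC as a directed cycle in the two opposite directions. For $u \in V_{dfs} \setminus V_{ITC}$, the forward visit leaves $dfsstat_u = 2$ intact (the reset guarded by $dfsstat_u = 1$ does not fire) and the last visit resets $(dfsstat, sim)_u$ to $(0,0)$, so $u$ is legal; together with the untouched nodes and the observation that \textit{CleanUp()} never modifies $spout$, $svars$, or $smem$, this yields condition~3. Hence the resultant configuration satisfies \cref{def:correctness}, and the anticipated hard part is exactly the phase-distinction argument of the previous paragraph, with \cref{clm:lastinout} doing the work of pinning down the reset round.
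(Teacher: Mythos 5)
Your proposal follows essentially the same route as the paper: re-walk the DFS trajectory by induction, distinguish the clean-up phase from the DFS phase via $p_{in}=par_i$ at first visits (and $sloc_s=1$ / $sim_i=1$ otherwise), and guarantee that the reset fires exactly at the last visit through the $(\textit{lastin},\textit{lastout})$ test. The end-state bookkeeping for the ITC orientation and for non-ITC nodes is also fine.

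There is, however, one genuine gap: you never prove the key fact you lean on, namely that no two visits of the same node in the DFS trajectory leave the same $(\textit{lastin},\textit{lastout})$ pair. You cite \cref{clm:lastinout}, but that claim is stated and proved \emph{inside} the paper's proof of this very lemma; the main text only asserts it informally, so a self-contained proof must establish it, and it is exactly the technical heart of the argument. Your phrase ``which by trajectory equivalence holds iff the current visit is the last one'' does not close this: trajectory equivalence only gives that the pair tested at the $k$-th clean-up visit equals the pair $L_k$ recorded at the $k$-th DFS visit, whereas the stored value is $L_{k''}$ for the \emph{last} DFS visit $k''$ of that node, so you still need $L_k \neq L_{k''}$ for every earlier visit $k$. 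The DFS property you do list (a node is entered through its parent port only at its first visit) settles only the case where the earlier visit is the first one ($\textit{lastin}=par_u$ then, never afterwards). The missing case is two non-first visits of $u$ entered through the same edge $(w,u)$: one must argue that this can occur in only two ways --- either the agent probed $w$ from $u$ and bounced straight back, after which it leaves $u$ through a port different from $\pi_u^{-1}(w)$, or it moved forward from $w$ into the already-visited $u$ and immediately backtracks, leaving through $\pi_u^{-1}(w)$ --- so the two visits necessarily record different \textit{lastout} values. Without this argument an early reset is not excluded, and an early reset would misroute a later visit (into \textit{TransMem()} for an ITC node, or into \textit{DFS()} for a node with both flags cleared), which would break the very trajectory induction your proof rests on.
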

\begin{proof}
  As a proof of \cref{lem:dfs}, we do not consider the condition of the variable \textit{smemupd}.
  In this phase, the agent traces the same route as in the DFS phase if resetting \textit{dfsstat} and \textit{sim} at node $i$ occurs correctly at the last visit of $i$. 
  We prove this by inducing  the indices of trajectory $i_0,i_1,\dots i_M$.
  Let $c_0,c_1,\dots c_{M'}$ be the trajectory of the agent in the clean-up phase.
  We show that $c_k=i_k$ holds and $c_k$ satisfies the conditions for executing the clean-up phase for any $k$.
  Because this phase starts at node $s$, $c_0=i_0$ and the condition for executing the clean-up phase is satisfied.
  Then, as the induction hypothesis, we assume that $c_k=i_k$ is satisfied for each $k\in [0,j]$ and the condition for executing the clean-up phase is satisfied at $c_j$.
  Because the method of deciding the next probed neighbor is the same as the DFS phase, $c_{j+1}=i_{j+1}$ is satisfied.
  Now, we have to show that the condition for executing the clean-up phase is satisfied at $c_{j+1}$.
  If $c_{j+1}$ is visited for the first time in this phase, $p_{in}=par_{c_{j+1}}$ holds, and the agent sets $sim_{c_{j+1}}=1$.
  Otherwise, the proof is performed if $sim_{c_{j+1}}=1$ holds; that is, $c_{j+1}$ is not reset up to round $j$.
  As we have already shown in \cref{lem:dfs} that $i_0, i_1, \dots, i_M$ is also the trajectory of the agent in the DFS phase, the agent stores the entry/outgoing port numbers in the variables \textit{lastin} and \textit{lastout} of node $i_k$ in the $k$-th round of the DFS phase.
  Let $L_k$ be a pair of values stored in $(lastin, lastout)_{i_k}$ when visiting $i_k$.
  Then, we have the following claim.
  \begin{claim}
    \label{clm:lastinout}
    For any $k',k \in [0, M]$, $L_{k'} \neq L_{k}$ holds, if $i_k=i_{k'}$ holds.
  \end{claim}
  \begin{proof}
    Assume $k' < k$ without loss of generality, and let $u = i_{k'}$ for short.
    We denote the values of \textit{lastin} and \textit{lastout} in $L_x$ as $lastin(x)$ and $lastout(x)$ respectively.
    First, consider the case in which round $k'$ is the first visit of $u$ in the DFS process.
    Subsequently, $lastin(k') = par_u$ holds.
    Because the agent visits $u$ from its parent only at the first visit, we have $lastin(k) \neq par_u$, and thus the claim holds.
    Next, consider the case in which round $k'$ is not the first visit of $u$.
    If $lastin(k') = lastin(k)$ holds, then the agent visits $u$ in rounds $k'$ and $k$ through the same edge $(w, u)$.
    There exist two scenarios in which the agent traverses edge $(w, u)$ in the second or later visit of $u$:
    In the first scenario, the agent moves forward from $u$ to $w$, but $w$ has already been visited and thus returns to $u$ by backtracking.
    Second, the agent moves forward from $w$ to $u$.
    These two are all possible scenarios, and thus they occur at rounds $k' - 1$ and $k - 1$, respectively.
    While the agent leaves from $u$ with the port necessarily different from $\pi^{-1}_u(w)$ in the first scenario, it leaves with the edge to $w$ in the second scenario (because the next movement is backtracked to $w$).
    That is, the two scenarios choose different numbers of outgoing ports to leave $u$.
    Thus, $lastout(k') \neq lastout(k)$ holds.
    The claim holds.
  \end{proof}
  Suppose that, for contradiction, $c_{j+1}$ is reset in round $k \leq j$.
  Because $c_{j + 1} = i_{j+1}$ holds, round $k$ is not the last visit of $c_k = c_{j+1}$.
  Let $c_{k'}$ be the last visit of $c_{j+1} (k' > k)$, and \cref{clm:lastinout} implies that $L_{k'} \neq L_{k}$.
  This contradicts the assumption that $c_{j+1}$ is reset to the round $k$.
  Consequently, $c_{j+1}$ is not reset to round $j$ and it can be concluded that the trace of the agent is the same as that of  the DFS phase.
  In addition, it has been proven that reset at node $i$ occurs at the last visit of $i$.
  From these two facts, it is easy to verify that the resultant configuration satisfies the condition of \cref{def:correctness}.
  The lemma is proved.
\end{proof}

The correctness of the memory transfer and the move-and-reset phases is relatively straightforward.
\begin{lemma}
  The memory transfer phase is correct.
\end{lemma}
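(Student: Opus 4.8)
The plan is to trace the agent through \cref{alg:transmem} while tracking only the two variables it can alter, namely \textit{smemupd} and \textit{smem}, together with the agent's position; at every node \textit{TransMem()} leaves \textit{dfsstat}, \textit{sim}, \textit{par}, \textit{cld}, \textit{sloc}, \textit{spout}, \textit{svars}, \textit{spin} untouched. By the correctness of the clean-up phase, the starting configuration satisfies $(dfsstat, sim, smemupd)_u = (0,1,0)$ for every $u \in V_{ITC}$, $smemupd_w = 0$ and legality for every $w \notin V_{ITC}$, and $(spout, svars, smem)_s$ still hold their original values; moreover $V_{ITC}$ is the node set of the ITC (a cycle of $G$) and, by the second clause of \cref{def:correctness} for the clean-up phase, the pointers \textit{par} and \textit{cld} induce the two opposite cyclic orientations of this cycle. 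In particular, on the finite cycle $V_{ITC}$ the ``par-successor'' map $u \mapsto \pi_u(par_u)$ and the ``cld-successor'' map $u \mapsto \pi_u(cld_u)$ are mutually inverse permutations. Finally, since neither \textit{DFS()} nor \textit{CleanUp()} ever writes \textit{sloc}, we still have $sloc_s = 1$ and $sloc_w = 0$ for $w \neq s$.

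First I would show that the agent performs exactly one circulation of $V_{ITC}$. At $s$ the main routine invokes \textit{TransMem()} since $(smemupd, dfsstat, sim)_s = (0,0,1)$; as $sloc_s = 1$, the agent sets $smemupd_s \leftarrow 1$ and departs through $cld_s$ if $b := smem_s = 0$ and through $par_s$ if $b = 1$. I would then prove by induction along the walk the invariant: the agent visits pairwise distinct nodes $s = u_0, u_1, u_2, \dots \in V_{ITC}$, where $u_{j+1}$ is the cld-successor of $u_j$ when $b = 0$ and the par-successor of $u_j$ when $b = 1$, and for each $j \ge 1$ the agent enters $u_j$ with $p_{in} = par_{u_j}$ when $b = 0$ and with $p_{in} = cld_{u_j}$ when $b = 1$. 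The entry-port part is the crux: if $b = 0$, the agent reaches $u_{j+1} = \pi_{u_j}(cld_{u_j})$, and since the par-successor and cld-successor maps are mutually inverse we get $\pi_{u_{j+1}}(par_{u_{j+1}}) = u_j$, i.e., $p_{in} = par_{u_{j+1}}$; the case $b = 1$ is symmetric. Given this entry port together with $(smemupd, dfsstat, sim)_{u_j} = (0,0,1)$ and $sloc_{u_j} = 0$, the main routine calls \textit{TransMem()} and its ``receive'' branch sets $smemupd_{u_j} \leftarrow 1$, $smem_{u_j} \leftarrow b$, and moves the agent to the next $b$-successor, closing the induction.

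The walk cannot revisit any node other than $s$: a node $u_j$ already carries $smemupd_{u_j} = 1$ after its single visit, so landing on it again would make the main routine hand control to \textit{MoveReset()} rather than \textit{TransMem()}, i.e., that round would no longer be of the memory transfer phase. Since $V_{ITC}$ is a finite cycle along which the agent advances one step per round, after exactly $|V_{ITC}|$ rounds it returns to $s$, where $smemupd_s = 1$ now triggers \textit{MoveReset()} and ends the sub-execution. In the resultant configuration: every $u \in V_{ITC}$ has $(dfsstat, sim, smemupd)_u = (0,1,1)$ and $smem_u = b = smem_s$ (for $u = s$ the value $smem_s$ was never overwritten); every $w \notin V_{ITC}$ was never visited, hence remains legal; and $(spout, svars, smem)_s$ are unchanged. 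This is exactly the condition of \cref{def:correctness} for the memory transfer phase, and the sub-execution takes $|V_{ITC}| \le n$ rounds.

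I expect the direction-preservation step to be the main obstacle: one must argue that, once the agent commits at $s$ to either the \textit{cld}- or the \textit{par}-orientation, every later node of the ITC is entered through its ``expected'' port (its \textit{par} port in the \textit{cld}-orientation, its \textit{cld} port in the \textit{par}-orientation), so that \textit{TransMem()} keeps propagating the \emph{same} bit $smem_s$ and keeps circulating in the \emph{same} direction. This relies entirely on the consistency of the \textit{par}/\textit{cld} orientation established in the clean-up phase (the second clause of \cref{def:correctness} for that phase); without it the agent could reverse direction at some node, leaving part of $V_{ITC}$ unvisited or writing inconsistent values of \textit{smem}. The remaining points --- that the traversal ends precisely when the agent returns to $s$, and the bookkeeping of the resultant configuration --- are routine once this invariant is in place.
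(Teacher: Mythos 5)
Your proof is correct and follows essentially the same route as the paper's: both rest on the clean-up phase's guarantee that \textit{par} and \textit{cld} give the two opposite orientations of the ITC (so an agent leaving along one orientation always enters the next node through the other orientation's port, preserving the transferred bit) and on the fact that $smemupd_s$ is set to $1$ at the start so the return to $s$ ends the phase. The paper states this as a brief observation, whereas you spell it out as an explicit induction along the circulation, which is just a more detailed rendering of the same argument.
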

\begin{proof}
  Under the condition of the initial configuration, the nodes $i$ with $sim_i = 1$ induce ITC, and variables \textit{par} and \textit{cld} provide two opposite orientations.
  Hence, evidently, the agent correctly circulates the ITC in the memory transfer phase and terminates at the initial position $s$ (note that, at the beginning of the phase, the agent sets $smemupd_s = 1$; thus, when it returns to $s$, the memory transfer phase necessarily terminates).
  Assuming that variables \textit{par} and \textit{cld} provide two opposite orientations if the agent leaves $s$ through port $par_s$, it enters the next node $x$ through port $cld_x$ and vice versa.
  This observation implies that the simulated memory value $smem_s$ is correctly transferred to all nodes in the ITC, and the condition of the resultant configuration in \cref{def:correctness} is satisfied.
\end{proof}

\begin{lemma}
  The move-and-reset phase is correct.
\end{lemma}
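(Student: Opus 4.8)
The plan is to follow the sub-execution of \textit{MoveReset()} (\cref{alg:movereset}) round by round, starting from the resultant configuration of a correct memory transfer phase, and to check that the configuration reached at the end meets the three conditions of \cref{def:correctness}. By the induction hypothesis the starting configuration satisfies: $sloc_s=1$ and $sloc_u=0$ for all $u\neq s$; $(dfsstat,sim,smemupd)_u=(0,1,1)$ and $smem_u=smem_s$ for every $u\in V_{ITC}$, while every $u\notin V_{ITC}$ is legal; and \textit{par} and \textit{cld} restricted to $V_{ITC}$ are two opposite orientations of the ITC, so that following \textit{par} traces a single directed Hamiltonian cycle on $V_{ITC}$ and $\pi_x(cld_x)=u$ whenever $\pi_u(par_u)=x$. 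Three auxiliary facts are used repeatedly: $\pi_s(spout_s)=t$ and $par_t=\pi_t^{-1}(s)$, because the DFS phase traversed $(s,t)$ first (so $t$ is the DFS-child of $s$ and the edge $(s,t)$ carries port number $par_t$ at $t$) while $spout_s$ is the outgoing port computed for $a^\ast$, whose destination is $t$; and $par_u\neq cld_u$ for every $u\in V_{ITC}$, because an ITC is a cycle of a simple graph and hence has at least three nodes. Since $smemupd_s=1$, the main routine invokes \textit{MoveReset()} at $s$, so the sub-execution starts there.

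First I would dispatch the two special rounds. In round $1$ the agent is at $s$ with $sloc_s=1$, so it takes the ``leave from $s$'' branch: it sets $sloc_s\leftarrow 0$ and moves through $\pi_s(spout_s)=t$; this is the genuine move of $a^\ast$ and is independent of $smem_s$. In round $2$ the agent is at $t$ with $sloc_t=0$ and entry port $par_t$, so it takes the ``reach to $t$'' branch: it sets $(sloc,spin,sim,smemupd)_t\leftarrow(1,\pi_t^{-1}(s),0,0)$ and moves through $\pi_t(par_t)$ back to $s$. This already establishes $spin_t=\pi_t^{-1}(s)$ and $sloc_t=1$, and, crucially, it sets $smemupd_t=0$, which is what will make the main routine run \textit{LocalComp()} rather than \textit{MoveReset()} on the next visit of $t$.

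Next I would treat the remaining rounds as a reset circulation that chases \textit{par}-pointers around the ITC. From round $3$ on, whenever the agent arrives at a node $u$ it still has $smemupd_u=1$, so \textit{MoveReset()} is invoked there; moreover the agent reached $u$ along a \textit{par}-edge from the node $x$ with $\pi_x(par_x)=u$, so by the clean-up consistency the entry port at $u$ is $cld_u$. Since $cld_u\neq par_u$ and $sloc_u=0$ (the unique node with $sloc=1$ after round $2$ is $t$, and $t$ is not revisited inside a \textit{MoveReset()} round), the agent takes the ``initialization'' branch, resets $(sim,smemupd)_u\leftarrow(0,0)$, and moves along the \textit{par}-edge to the next node of the cycle. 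Because \textit{par} is a single Hamiltonian cycle on $V_{ITC}$, this circulation visits every ITC node other than $t$ (which was already handled in round $2$) and then brings the agent back to $t$; there $smemupd_t=0$ diverts control to \textit{LocalComp()}, which is not a \textit{MoveReset()} round, so the sub-execution ends and the configuration at the beginning of that round is the resultant configuration. I would then read the three conditions off the trace: every ITC node now has $(dfsstat,sim,smemupd)=(0,0,0)$ ($dfsstat$ was $0$ throughout, $sim$ and $smemupd$ were cleared at $t$ in round $2$ and at every other ITC node during the circulation---note $s$ is revisited in round $3$, where it is cleared); every non-ITC node was never entered and remains legal; $sloc=1$ holds only at $t$ (it was cleared at $s$ in round $1$ and never set elsewhere); $spin_t=\pi_t^{-1}(s)$ is the entry port of $a^\ast$; and $smem_t$ was never touched by \textit{MoveReset()}, so $smem_t=smem_s$. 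The sub-execution lasts $|V_{ITC}|+1=O(n)$ rounds.

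I expect the main obstacle to be the bookkeeping of the three branches of \textit{MoveReset()}: one must be sure that the reset circulation never takes the ``leave from $s$'' branch (which would clear a wrong \textit{sloc}) and never takes the ``reach to $t$'' branch (which would set a wrong \textit{sloc} to $1$ and overwrite the wrong \textit{spin}). The former is excluded because after round $2$ the only node with $sloc=1$ is $t$, which is not revisited within a \textit{MoveReset()} round; the latter is excluded because during the circulation every node is entered through its \textit{cld}-port, which differs from its \textit{par}-port precisely because the ITC has at least three nodes. A second delicate point is that the phase terminates only implicitly: it ends not through an explicit flag but because $smemupd_t=0$ hands control to \textit{LocalComp()}, so the resultant configuration coincides with the (legal) initial configuration of the next simulated round, which is exactly what \cref{def:correctness} demands.
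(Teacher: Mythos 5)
Your proof is correct and takes essentially the same route as the paper's: a direct verification that \textit{MoveReset()} (\cref{alg:movereset}), started from the resultant configuration of a correct memory transfer phase, first performs the actual move of $a^{\ast}$ and records $spin_t=\pi^{-1}_t(s)$, and then resets \textit{sim} and \textit{smemupd} along the \textit{par}/\textit{cld} orientation of the ITC, so the resultant configuration is legal with $sloc_t=1$ and $smem_t=smem_s$. The only difference is that you spell out details the paper delegates to the informal description in \cref{detail} (the round-by-round branch analysis, the implicit termination via $smemupd_t=0$, and the observation that $par_u\neq cld_u$ because the ITC of a simple graph has at least three nodes), which makes your write-up more self-contained but not a different argument.
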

\begin{proof}
  Under the condition of the initial configuration, all nodes not in $V_{ITC}$ have already been legal (i.e., $(dfsstat, sim, smemupd) = (0, 0, 0)$). 
  As explained in \cref{detail}, the agent correctly circulates the ITC along the orientation using \textit{cld} and resets \textit{sim} and \textit{smemupd}.
  Because \textit{dfsstat} is already zero, we can conclude that the resulting configuration is legal.
  The remaining issue is to show that it is the configuration interpreted as that for $A^{\ast}$ after a one-round simulation.
  It has already been shown that the memory value is transferred correctly and $svars_s$ is updated correctly in the local computation phase.
  In addition, the agent updates $spin_t$ by $p_{in}$ at the first movement from $s$ to $t$ in the move-and-reset phase; that is, $spin_t$ stores the port number of edge $(s, t)$.
  Because $a^\ast$ moves from $s$ to $t$ in this round, $spin_t$ is equal to the entry port number of $a^\ast$ in the next simulated round.
\end{proof}

By combining all of the lemmas above, we obtain the main theorem. 
\begin{theorem}
  Algorithm $A$ shown in \cref{alg:main} correctly simulates $A^{\ast}$ in any 2-edge-connected graph $G$.
  The round complexity of the algorithm 
  to simulate one round of $A^{\ast}$ is $O(n^2)$.
  The required storage size is $O(\lambda^{\ast} + \log \Delta)$ bits per node.
\end{theorem}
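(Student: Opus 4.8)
The plan is to derive the theorem from the five per-phase correctness lemmas together with two routine bookkeeping computations, one for the round complexity and one for the storage. For the correctness claim, I would argue that the main routine of \cref{alg:main} acts as a dispatcher that, from $(dfsstat, sim, smemupd, sloc)_i$ and $p_{in}$ at the current node $i$, invokes exactly the phase whose sub-execution is in progress, and that the resultant configuration of each phase is precisely the precondition required by the next. Concretely, starting from any valid legal configuration $C \in \mathcal{C}$, the local computation phase (trivially correct) produces the resultant configuration described in \cref{def:correctness}; this triggers the DFS phase (\cref{lem:dfs}), whose resultant configuration triggers the clean-up phase, then the memory transfer phase, then the move-and-reset phase. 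The move-and-reset phase returns a legal configuration $C' \in \mathcal{C}$, and it then remains to check that $\gamma(C) \xrightarrow{\phi^\ast, G} \gamma(C')$ for $\gamma$ given by $\gamma_M(M_i) = svars_i$ and $\gamma_A(M_i) = (smem_i, spin_i)$. This follows by conjoining the ``resultant configuration'' statements: the local computation phase updates $svars_s$ and $smem_s$ according to $\phi^\ast$ and sets $spout_s$ to the port toward $t$; the memory transfer phase guarantees $smem_t = smem_s$ (since $t \in V_{ITC}$); and the move-and-reset phase sets $sloc_t = 1$ with $sloc$ elsewhere $0$, sets $spin_t = \pi_t^{-1}(s)$, and leaves $smem_t$ untouched. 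Since $C$ is legal, the sequence $t_0, t_1, \dots$ of legal-configuration rounds is well-defined and infinite, and $T(i) = t_i$ with this $\gamma$ witness the simulation in the sense of \cref{def:simulation}.

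For the round complexity I would bound each sub-execution separately. The local computation phase takes one round after the stated separation of \textit{LocalComp()} and \textit{DFS()}. In the DFS phase the agent follows the trajectory $S = i_0, i_1, \dots, i_M$; since the $cld$ counter at each node is advanced past each port exactly once, the agent enters any node $i \in V_{dfs}$ only $O(\Delta_i)$ times, so $M = O\!\left(\sum_{i \in V_{dfs}} \Delta_i\right) = O(|E|) = O(n^2)$. The clean-up phase retraces the same trajectory (as established in its correctness proof) and hence also costs $O(n^2)$ rounds. The memory transfer and move-and-reset phases each circulate the ITC once, and $|V_{ITC}| \le n$, so each costs $O(n)$ rounds. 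Summing the five phases yields $O(n^2)$ rounds per simulated round of $A^\ast$.

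For the storage bound I would tally the variables of \cref{tb:variables}: \textit{svars} occupies $\lambda^\ast$ bits; \textit{spin}, \textit{spout}, \textit{par}, \textit{cld}, and \textit{lastin}/\textit{lastout} are each a port number or a counter bounded by $\Delta$, hence $O(\log \Delta)$ bits; and \textit{sloc}, \textit{smem}, \textit{smemupd}, \textit{sim}, \textit{dfsstat} are $O(1)$-bit flags (recall $a^\ast$ has one-bit memory, so $smem$ is a single bit). The total is $\lambda = \lambda^\ast + O(\log \Delta)$ bits per node, and $O(\log\Delta)$ bits of temporary memory suffice since each line of the pseudocodes manipulates only a constant number of port-number-sized quantities.

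The hard part is the correctness argument, specifically verifying that the dispatcher never misidentifies the current phase: one must check that the guards in \cref{alg:main} (and the internal branches of \cref{alg:dfs,alg:cleanup,alg:transmem,alg:movereset}) partition the reachable configurations so that, at the first round of each phase, the correct precondition holds, and, at every intermediate round, the agent stays within that phase until exactly its termination condition is met. The per-phase lemmas already absorb the subtlest instance of this difficulty (the DFS/clean-up distinction, resolved via \cref{clm:lastinout}), so what remains is the comparatively mechanical, though case-heavy, verification that each boundary round between consecutive phases is recognized correctly from local state alone.
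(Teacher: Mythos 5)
Your proposal is correct and follows essentially the same route as the paper: chain the per-phase correctness lemmas (with the phase-dispatch/legal-configuration bookkeeping) to get the simulation property under the stated $\gamma$, bound the DFS and clean-up sub-executions by $O(|E|)=O(n^2)$ rounds and the memory-transfer and move-and-reset sub-executions by $O(|V_{ITC}|)=O(n)$ rounds, and tally the storage as $\lambda^\ast$ bits for \textit{svars} plus $O(\log\Delta)$ bits for the remaining variables. Your write-up is in fact somewhat more explicit than the paper's brief proof (notably on the $\gamma$-transition check and the edge-counting for the DFS trajectory), but no new idea or different decomposition is involved.
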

\begin{proof}
  The correctness clearly follows \cref{def:correctness}, and all phases are correct.
  To limit time complexity, we bound the running time of each phase.
  The first two phases (i.e., DFS and clean-up) depend on the running time of DFS.
  To execute DFS for the graph such that the number of nodes and edges are $\left\lvert V \right\rvert$ and $\left\lvert E \right\rvert$, respectively, it needs $O(\left\lvert E\right\rvert)=O(n^2)$ rounds.
  The running time of the remaining two phases was bounded by $O\left(\left\lvert V_{ITC} \right\rvert\right)=O(n)$ rounds.
  Hence, the total running time (per round of simulation) is $O(n^2)$. 

  The storage size consumes $\lambda^\ast$ bits for \textit{svars}, which corresponds to the storage of $A^\ast$.
  The sizes of the other variables are bounded by $O(\log \Delta)$.
  The storage size was $O(\lambda^{\ast} + \log \Delta)$ bits per node.
\end{proof}

\section{Conclusion}
In this study, we demonstrated the equivalence of the computational power of a single oblivious agent and a single one-bit memory agent in a 2-edge-connected graph with $O(\log \Delta)$-bit storage by presenting the algorithm $A\in \mathcal{A}(0,\lambda)$ that simulates any $A^\ast\in \mathcal{A}(1,\lambda^\ast)$ on 2-edge-connected graphs with $\lambda=\lambda^\ast+O(\log\Delta)$.
The time overhead of our simulation algorithm was $O(n^2)$ per round.
That is, if the original algorithm $A^{\ast}$ runs in polynomial time, then our simulator also works in polynomial time. 
Finally, we conclude this study by explaining the open problem.
The primary open problem deduced from our results is whether an oblivious agent can simulate any one-bit agent on 2-edge-connected graphs under the assumption that only $O(1)$-bit storage per node is available.
As a weaker form of this problem, 2-edge-connected graphs with $O(1)$-bit storage per node can be also explored.

\section*{Acknowledgement}
This study was initiated by the discussion with Prof. Shantanu Das when he visited the third author (Taisuke Izumi) in 2018.
We greately appreciate his valuable comments at the discussion.

\bibliography{SimulationOn2-ConnectedGraph}
\bibliographystyle{plain}

\end{document}